\renewcommand{\familydefault}{cmss}
\numberwithin{equation}{section}					
\newcommand{\Beq}{\begin{equation}}
\newcommand{\Eeq}{\end{equation}}
\newcommand{\BeqO}{\begin{equation*}}
\newcommand{\EeqO}{\end{equation*}}
\newcommand{\NN}{\mathbbm{N}}
\newcommand{\RR}{\mathbbm{R}}
\newcommand{\OO}{\mathcal{O}}											
\newcommand{\azk}{n}
\newcommand{\drk}{d}
\newcommand{\auf}[2]{\ensuremath{ #1,\ldots , #2 }}							
\renewcommand {\le}{\left}													
\newcommand {\ri}{\right}													
\newcommand{\qmbox}[1]{\quad\text{#1}\quad}									
\newcommand{\dd}{\mathrm{d}}												
\newcommand{\tab}[2]{\ensuremath{ \frac{\dd #1}{\dd #2} }}					
\newcommand{\abs}[1]{\ensuremath{ \left\vert #1 \right\vert }}				
\newcommand{\ee}{\varepsilon}												
\newcommand {\eh}{{\textstyle \frac{1}{2}}}
\newcommand{\vuo}[2]{  \ensuremath{ v_{#1}^{(#2)}} } 
\newcommand{\dvuo}[2]{ \ensuremath{ \Delta  v_{#1}^{(#2)}} } 
\newcommand{\Auo}[3]{  \ensuremath{ #1_{#2}^{(#3)}} } 
\newtheoremstyle{saetze}{10pt}{10pt}{\itshape}{}{\bfseries}{:}{.5em}{}
\theoremstyle{saetze}
\newtheorem{defi}{Definition}[section]
\newtheorem{theorem}[defi]{Theorem}
\newtheorem{lemma}[defi]{Lemma}
\newcommand{\beq}{\begin{equation}}
\newcommand{\eeq}{\end{equation}}
\newcommand{\Leq}[1]{\phantomsection \label{#1}\end{equation}}
\newcommand{\beqn}{\begin{eqnarray}}
\newcommand{\eeqn}{\end{eqnarray}}
\newcommand{\beqno}{\begin{eqnarray*}}
\newcommand{\eeqno}{\end{eqnarray*}}
\renewcommand {\l}{\left}
\newcommand {\bN}{{\mathbb N}}
\newcommand{\bem}{\l(\! \begin{array}}
\newcommand{\eem}{\end{array}\!\ri)}
\newcommand{\bsm}{\left(\begin{smallmatrix}} 
\newcommand{\esm}{\end{smallmatrix}\right)}  
\renewcommand {\max}{{\textup{\textrm{max}}}}
\begin{document}
\title {Elastic Scattering of Point Particles With Nearly Equal Masses}
\author{%
Andreas Knauf %
\thanks{%
Department Mathematik, Universit\"{a}t Erlangen-N\"{u}rnberg, %
Bismarckstr.\ $1 \eh$, %
91054 Erlangen, Germany, %
e-mail: knauf@mi.uni-erlangen.de %
}%
\and
Markus Stepan
}
\date{September 2011}
\maketitle
\begin{abstract}
We show that for $n$ billiard particles on the line 
there exist three open sets in the product of phase space 
and the space  of their masses, such that these particles 
exhibit exactly $n-1$, $\bsm n\\ 2\esm$ respectively 
$\bsm n+1\\ 3\esm$ collisions. These open sets intersect any
neighborhood of the diagonal in mass space.
\end{abstract}
\section{Introduction}
We consider a system of the $\azk$ hard spheres, that move on a 
straight line. The elastic collision between two spheres is to be 
the only interaction, being defined by conservation of 
energy and momentum.

Estimates of the number of collisions in hard ball systems, and more 
generally, in semidispersing
billiards, have been studied for a long time.
But it was only in 1998 when Burago, Ferleger and Kononenko \cite{burago98}  
could show the following statement,  using tools from metric geometry: 
 
\begin{quote}
In a system of $ \azk$ hard spheres, that move in $ \RR^{\drk} $, 
the total number of collisions is bounded above uniformly in the 
initial conditions.
\end{quote}

All known upper bounds on the collision number of a general system 
of hard spheres increase superexponentially in the number of particles 
$ \azk $, even for equal masses.

One of the few systems that can be solved exactly is that for 
equal masses and $d=1$. 
For $d=1$ without loss of generality one can assume the 
particles to be pointlike.
For equal masses scattering then just exchanges the labels
of the particles, thus generically leading to exactly 
${\azk \choose 2} $ collisions.

In this paper we are going to study of hard sphere systems 
in $d=1$, having approximately equal masses. 
Chen could show in the 
papers \cite{chen07,chen09} that subject to certain 
requirements for the masses, the upper quadratic bound of 
$ {\azk \choose 2} $ remains true. Now we show that 
additionally to these cases, there are
open sets in the product of phase space 
and the space $(0,\infty)^{n}$ of masses, such that these particles 
exhibit exactly $n-1$, $\bsm n\\ 2\esm$ respectively 
$\bsm n+1\\ 3\esm$ collisions. These three open sets intersect any
neighborhood of the diagonal in mass space.

This shows that --- concerning the number of collisions --- the 
case of exactly equal masses is nongeneric.

\section{Nearly Equal Masses}

The position of the $k$--th (pointlike) particle at time $t$ is
denoted by $q_k(t)$, and we assume initial conditions 
with $q_k(0)<q_{k+1}(0)\quad(k=1,\ldots, n-1)$.\\
We begin with a trivial observation, valid only for $d=1$.
\begin{lemma}
If for $\drk=1$, all $\azk\in\NN$  and any mass distribution $(\auf{m_1}{m_\azk})$
the number of collisions is strictly smaller than $\azk-1$, then all velocities
are equal, so that no collision occurs.
\end{lemma}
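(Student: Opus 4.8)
The plan is to count collisions over all of time, $t \in \RR$ (the natural scattering count, as the title suggests), and to reduce the claim to an elementary comparison of two uniform motions. First I would record the structural facts available for $\drk = 1$: since point particles cannot cross, the ordering $q_1(t) < \ldots < q_n(t)$ is preserved and every collision is between some adjacent pair $(k,k+1)$; and since the total number of collisions is finite (by the bound quoted above), the motion is free for all sufficiently negative $t$. On that free initial stretch the preserved ordering forces the incoming velocities to be non-increasing, $v_1^- \geq v_2^- \geq \ldots \geq v_n^-$, since otherwise some gap $q_{k+1}(t) - q_k(t)$ would become negative as $t \to -\infty$.

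Next I would run a connectivity argument. Form the graph on the vertex set $\{1, \ldots, n\}$ having the edge $\{k, k+1\}$ exactly when particles $k$ and $k+1$ collide at least once. This is a subgraph of the path on $n$ vertices, whose only connected spanning subgraph is the full path with its $n-1$ edges. Fewer than $n-1$ collisions give fewer than $n-1$ edges, so the graph is disconnected: there is an index $k$ such that particles $k$ and $k+1$ never collide. Consequently no collision ever crosses the gap between the blocks $\{1, \ldots, k\}$ and $\{k+1, \ldots, n\}$, the two blocks evolve as independent systems, and each conserves its own total momentum; hence their centre-of-mass velocities $V_L$ and $V_R$ are constant in time.

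The heart of the argument is the comparison of these two uniform motions. Because every particle of the left block stays strictly to the left of every particle of the right block for all $t$, the left centre of mass stays strictly left of the right centre of mass for all $t \in \RR$; two points moving at constant velocities and ordered for every real time must have the same velocity, so $V_L = V_R$. Now I feed in the incoming ordering: $V_L$ is a weighted average of $v_1^- \geq \ldots \geq v_k^-$, so $V_L \geq v_k^-$, while $V_R$ is a weighted average of $v_{k+1}^- \geq \ldots \geq v_n^-$, so $V_R \leq v_{k+1}^-$. Thus $V_L \geq v_k^- \geq v_{k+1}^- \geq V_R = V_L$, forcing equality throughout; since a weighted average equals the smallest (respectively largest) of its entries only when all entries agree, I conclude $v_1^- = \ldots = v_n^-$. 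Equal incoming velocities mean all gaps are constant, so no collision occurs at all, which is exactly the assertion.

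The main obstacle is conceptual rather than computational: one must realise that the count is over all of time, and then spot the centre-of-mass comparison, which replaces any case analysis of how individual collisions accelerate or decelerate particles by a single monotonicity statement. The only delicate points are justifying the non-increasing incoming velocities (which rests on finiteness of the collision number) and the elementary but essential observation that two uniformly moving, permanently ordered points must travel at a common velocity.
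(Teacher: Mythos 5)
Your proof is correct, but after the opening move it follows a genuinely different route from the paper. You share the first step: fewer than $n-1$ collisions forces some adjacent pair $(k,k+1)$ never to meet, so the system decouples into the blocks $\{1,\dots,k\}$ and $\{k+1,\dots,n\}$. From there the paper argues locally and by iteration: particle $k$ is then only ever struck from its left, so $q_k$ is convex, while particle $k+1$ is only ever struck from its right, so $q_{k+1}$ is concave; a convex function lying below a concave function on all of $\RR$ forces both to be affine with equal slopes, hence $v_k\equiv v_{k+1}$, neither particle ever collides at all, and the argument propagates outward to $k-1$, $k+2$, and so on. Your argument is global instead: conservation of momentum of each decoupled block gives uniformly moving centres of mass, the permanent ordering forces $V_L=V_R$, and the ordering of the incoming velocities squeezes the whole chain to equality in one step, with no induction. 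Both arguments require $t$ to range over all of $\RR$ (the paper for ``convex below concave implies affine'', you for the comparison of the two uniform motions and for the existence of an incoming free regime), so you agree with the paper on the reading of ``number of collisions''. The paper's version is marginally more elementary, using only the sign of the velocity change in a collision and no asymptotics; yours is arguably cleaner in that it isolates the single conserved quantity responsible for the rigidity and avoids the iteration. One small remark: you do not need the Burago--Ferleger--Kononenko bound to know the collision count is finite --- finiteness is already part of the hypothesis of the lemma.
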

\begin{proof}
Under the above assumption there is a particle, with number $k<\azk$, 
not being involved in
any collision with particle number $k+1$, so that
$t\mapsto q_k(t)$ is convex and $t\mapsto q_{k+1}(t)$
is concave. But as $q_k(t)<q_{k+1}(t)$, both functions must be affine,
with velocities $v_{k+1}=v_k$.
So particle number $k$ experiences no
collision with particle number $k-1$ and particle number $k+1$
experiences no collision with particle number $k+2$.
An iteration of the argument shows the assertion.
\end{proof}

For $\drk=1$ and $\azk$ equal masses the open set of initial
conditions 
\Beq
	U=U(\azk):=\{(q,v)\in \RR^n\times \RR^\azk \mid q_k< q_l\mbox{ and }
	v_k> v_l \mbox{ for } 1\leq k< l\leq n\}
	\label{eq:ma}
\Eeq
leads to exactly ${\azk \choose 2}$ two-body collisions (if one continues 
multi-body collisions in a way that the {\em set} of velocities $v_k$ does not change), 
and all these collisions occur at positive times.

By Chen's result \cite{chen07} for $\drk=1$ and any $\ee>0$ there exists an
open set of $\azk$ masses $(\auf{m_1}{m_\azk})\in (1-\ee,1+\ee)^\azk$,
leading to at most ${\azk \choose 2}$ collisions.

This is complemented by the following result: 

\begin{theorem}\label{thm:maintheorem}
For $\drk =1$, all $\azk \in \NN$ and for all $\ee > 0$ there exist non-empty open subsets 
$V_i:=W_i \times U_i \subset (1-\ee,1+\ee)^\azk \times (\RR^\azk \times \RR^\azk)$ $(i=1,2,3)$  
of the extended phase space such that for any initial data in the set the number of collisions
\begin{enumerate}[1)]\setlength{\itemindent}{10mm}
	\item  on $V_1$ equals $\azk-1 ={\azk-1 \choose 1}$,
	\item  on $V_2$ equals ${\azk \choose 2}$, and
	\item  on $V_3$ equals ${\azk+1 \choose 3}$.
\end{enumerate}
\end{theorem}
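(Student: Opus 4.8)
The plan is to reinterpret the $\azk$--particle dynamics as a billiard in the Weyl chamber $C=\{q_1\le\cdots\le q_\azk\}\subset\RR^\azk$ equipped with the mass inner product $\langle u,w\rangle=\sum_k m_k u_k w_k$. Since hard points on the line cannot cross, the ordering $q_1(t)\le\cdots\le q_\azk(t)$ is preserved for all time, so the $k$--th particle only ever meets its neighbors and each collision is an orthogonal reflection (in the mass metric) of the velocity off the wall $H_k=\{q_k=q_{k+1}\}$. After removing the center of mass and fixing the energy, the number of collisions equals the number of reflections of a great--circle geodesic in the spherical simplex $\Delta$ cut out by the $H_k$. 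A direct computation of the normals $N_k=e_k/m_k-e_{k+1}/m_{k+1}$ shows that nonadjacent walls meet orthogonally, while the interior dihedral angle between $H_k$ and $H_{k+1}$ has $\cos(\pi-\theta_k)=-\sqrt{m_k m_{k+2}/\bigl((m_k+m_{k+1})(m_{k+1}+m_{k+2})\bigr)}$, equal to $\pi/3$ for equal masses. All three counts then become reflection counts, and the three open sets are obtained by choosing the masses (hence the angles) near the equal--mass values $(1,\dots,1)$, with the perturbation taken smaller than $\ee$.

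For $V_1$ I would take the Newton's--cradle datum: $q$ strictly ordered and well separated, particle $1$ moving right with all others at rest, and masses \emph{strictly monotone}, say $m_k=1-(k-1)\delta$ with small $\delta>0$. A single pulse propagates rightward, producing the collisions $1$--$2,\,2$--$3,\dots,(\azk-1)$--$\azk$ in turn. An explicit evaluation of the transmitted velocity $u_{k+1}=\tfrac{2m_k}{m_k+m_{k+1}}u_k$ and the residual $a_k=\tfrac{m_k-m_{k+1}}{m_k+m_{k+1}}u_k$ shows the residuals are positive and \emph{increasing} in $k$, so no trailing particle ever catches the one ahead and no collision repeats; this gives exactly $\azk-1$ collisions. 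Every defining inequality is open, so a neighborhood works, and $\delta\to0$ places $W_1$ arbitrarily close to the diagonal.

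For $V_2$ I would start from equal masses and the set $U$ of \eqref{eq:ma}, where reversed velocities force every faster particle to overtake every slower one, yielding exactly $\binom{\azk}{2}$ collisions and leaving the velocities sorted so that no further collision is possible. Choosing a point of $U$ at which all collisions are strictly two--body and occur at distinct times makes the whole finite collision sequence transversal, so the number of reflections is locally constant under a joint perturbation of the data and the masses. Hence an entire neighborhood inside $(1-\ee,1+\ee)^\azk\times(\RR^\azk\times\RR^\azk)$, containing the equal--mass point, realizes exactly $\binom{\azk}{2}$.

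The real work is $V_3$, where $\binom{\azk+1}{3}=\binom{\azk}{2}+\binom{\azk}{3}$ demands $\binom{\azk}{3}$ collisions \emph{beyond} the equal--mass maximum, which can only come from re-collisions of adjacent pairs. My plan is to lower the interior masses slightly so that $4m_k m_{k+2}>(m_k+m_{k+1})(m_{k+1}+m_{k+2})$ and hence every adjacent dihedral angle drops just below $\pi/3$, and then to argue by induction on $\azk$, using the orthogonality of nonadjacent walls to decouple each newly created reflection from the rest of the cascade. For $\azk=3$ this is the exact planar fact that a wedge of angle $\theta\lesssim\pi/3$ admits $\lceil\pi/\theta\rceil=4$ reflections, matching $\binom{4}{3}$; the inductive step should superimpose a full $\binom{\azk}{2}$--collision sweep of the added particle onto a size--$(\azk-1)$ configuration. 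I expect the main obstacle to be precisely this higher--dimensional reflection count: for $\azk\ge4$ one must show that the geodesic in the perturbed spherical simplex undergoes \emph{exactly} $\binom{\azk+1}{3}$ reflections --- the cascade must neither terminate early nor overshoot --- and that this persists on an open set of masses and data shrinking to the diagonal. Simultaneously controlling the timing and ordering of all these reflections, so that the bookkeeping closes to the exact binomial, is the crux of the argument.
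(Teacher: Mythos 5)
Your constructions for $V_1$ and $V_2$ are sound. For $V_2$ you use exactly the paper's argument (structural stability of a finite, transversal sequence of binary collisions around the equal--mass point of the set $U$ in \eqref{eq:ma}). For $V_1$ your Newton's--cradle datum with arithmetically decreasing masses is a genuinely different and arguably cleaner route than the paper's induction: the ratio $a_{k+1}/a_k=2m_k/(m_{k+1}+m_{k+2})>1$ does show the residual velocities increase, so the final velocities are sorted and exactly $n-1$ collisions occur; all conditions are open and the masses approach the diagonal as $\delta\to0$. (The paper instead builds $V_1$ inductively with slightly \emph{increasing} masses, controlled by the recursion $m_n/m_{n-1}\le f(m_{n-1}/m_{n-2})$ with $f(x)=(1+x)/(3-x)$ from \eqref{eq:schranken_fkten}; both work.)

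The genuine gap is Case 3, and you have identified it yourself: your proposal for $V_3$ is a strategy, not a proof. Perturbing the dihedral angles of the spherical simplex below $\pi/3$ and invoking $\lceil\pi/\theta\rceil=4$ handles only $n=3$; for $n\ge4$ the count of reflections of a geodesic in a multi--walled spherical simplex is exactly the hard part, and orthogonality of nonadjacent walls does not by itself decouple the cascade, because each re-collision of the pair $(k,k+1)$ changes the velocity differences of \emph{both} neighbouring pairs and can trigger or suppress further events. What is missing is a mechanism that forces the extra $\binom{n}{3}$ collisions to occur and then certifies that the process stops. The paper supplies this with a concrete dynamical scheme: add particle $n$ far to the right and much slower than the already-resolved $(n-1)$--particle system (which by induction has produced $\binom{n}{3}$ collisions), so that $\Delta v_{n-1}^{(0)}\ll0$. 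The collision $(n-1,n)$ then makes $\Delta v_{n-2}^{(1)}\approx\frac{2m_n}{m_{n-1}+m_n}\Delta v_{n-1}^{(0)}\ll0$ as in \eqref{eq:approx}, and the mass condition \eqref{eq:massen_rekursion_3}, i.e.\ $m_n/m_{n-1}>f(m_{n-1}/m_{n-2})$, guarantees that after the collision $(n-2,n-1)$ the pair $(n-1,n)$ is again approaching. Iterating gives a leftward cascade of exactly $n-1$ collisions, after which particles $2,\dots,n$ sit in a reversed-velocity configuration of Case 2 type and contribute $\binom{n-1}{2}$ further collisions; the identity $(n-1)+\binom{n-1}{2}=\binom{n}{2}$ and $\binom{n}{3}+\binom{n}{2}=\binom{n+1}{3}$ close the induction. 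The control that the iterated mass ratios stay in $(1-\varepsilon,1+\varepsilon)$ comes from the explicit iterates $f_k$ in \eqref{eq:schranken_fkten} and the bound \eqref{ineq:gfx}. Without an argument of this kind (or a completed reflection-count in your geometric picture), your Case 3 does not establish the theorem.
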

\begin{proof}
\textit{Preliminary note:} 
As the velocities are unchanged between collisions, we
denote the velocity of the $i$--th particle between the 
$(\ell-1)$--th and $\ell$--th collision by $\vuo{i}{\ell-1}$.
If the $\ell$--th collisions involves particles $i$ and $i+1$, then
\Beq
\vuo{i}{\ell}  =  
	{\textstyle \frac{ (m_{i}-m_{i+1})\vuo{i}{\ell-1} + 2m_{i+1}\vuo{i+1}{\ell-1}  } {m_{i}+m_{i+1} }}
	\qmbox{,}
	\vuo{i+1}{\ell}  =  {\textstyle \frac{ (m_{i+1}-m_{i})\vuo{i+1}{\ell-1} + 2m_{i}\vuo{i}{\ell-1}  }
	{m_{i}+m_{i+1} }}.
\label{vv}
\Eeq
We define for $i=\auf{1}{\azk-1}$ the velocity differences by 
$\dvuo{i}{\ell}:=\vuo{i+1}{\ell}-\vuo{i}{\ell}$. 
During a collision between particles $i$ and $i+1$ we have
\Beq
\begin{gathered}
	\dvuo{i}{\ell}=-\dvuo{i}{\ell-1} 	
	\qmbox{,}	
	\dvuo{k}{\ell}= \dvuo{k}{\ell-1} \text{ if }\abs{i-k}>1 ,  \label{eq:gesch_diff}\\
	\dvuo{i-1}{\ell} = \dvuo{i-1}{\ell-1}+ {\textstyle \frac{2m_{i+1}}{m_{i}+m_{i+1}}} \dvuo{i}{\ell-1} \qmbox{and}
	\dvuo{i+1}{\ell} = \dvuo{i+1}{\ell-1}+ {\textstyle \frac{2m_{i}}{m_{i}+m_{i+1}}} \dvuo{i}{\ell-1} . 
\end{gathered}
\Eeq
Later we will apply the following functions, indexed by $k\in\bN$, to
quotients of adjacent masses:
\Beq
\begin{aligned}
 	g_1(x)&:=g(x):= 2x-1  , \   
 	& g_{k+1} &:= g \circ g_{k} 
 	\text{, thus }  
 	& g_{k} (x) &= 2^k(x-1)+1 .\\ 
 	f_1(x)&:=f(x):= \frac{1+x}{3-x} , \   
 	&f_{k+1} &:= f \circ f_{k} 
 	\text{, thus }   
 	& f_{k} (x) &= \frac{k(x-1)-2x}{k(x-1)-2} .
\end{aligned}\label{eq:schranken_fkten}
\Eeq
For all these functions
$\lim_{x\searrow 1} f_{k} (x) = \lim_{x\searrow 1} g_{k} (x) = 1$
 and  
\BeqO
	  f_{k} (x) = g_{k} (x) = {\textstyle \frac{k+2^{k+1}-2}{k}} 
	  \qquad \mbox{ for }x={\textstyle \frac{k+2-2^{1-k}}{k}}.
\EeqO
The function $f_{k}$ has a pole at $(k+2)/k$, and 
because
\BeqO
	\tab{}{x}f_{k} (x) = \frac{4}{(k(x-1)-2)^2} > 0 \qquad \mbox{ for all } x,k
\EeqO
$f_{k}$ is strictly increasing in the interval $\big(1,(k+2)/k\big)$. 
In addition $f_{k}$ is convex in that interval, because
\BeqO
	\tab{^2}{x^2}f_{k} (x) = -\frac{8k}{(k(x-1)-2)^3} > 0 
	\qmbox{if and only if} x < \frac{k+2}{k}.
\EeqO
By the validity of the inequalities $\frac{k+1}{k} \leq \frac{k+2-2^{1-k}}{k} \leq \frac{k+2}{k}$
for 
 $k\in \NN$, we see that for all $x$ in the {\em sub}interval $\big(1,(k+1)/k\big)$
the following inequalities are also true:
\Beq
	g_{k} (x) > f_{k} (x) = x + \OO\le((x-1)^2\ri) > x.
	\label{ineq:gfx}
\Eeq

\noindent
\textit{Case 1)} 
It is the idea of the proof to find conditions on the masses and the 
initial velocities, so that the sequence of collisions is 
$(1,2)$, $(2,3)$, $(3,4)$,\ldots, $(\azk-1,\azk)$.

We consider for $\delta:= (1+\ee)^{1/(n-1)}$ the non-empty open set
$W_1=W_1(n,\delta):=$
\BeqO
	\le\{(\auf{m_1}{m_n})\in (1-\ee,1+\ee)^n\mid\forall 
	k=\auf{2}{n}: {\textstyle \frac{m_{k}}{m_{k-1}}}\in(1,\delta)\ri\}.
\EeqO
We prove the assertion by induction on the number $\azk$ of particles. 

The case $\azk=2$ is trivial.
Now we assume that we found a neighbourhood $U_1(\azk-1)$ for 
the first $\azk-1$ particles. 
Then for all initial conditions 
$x \in U_1(\azk-1)$ there exists a time $T(x) \in (0,\infty)$, so
that no collisions occur after $T(x)$. 
$T(x)$ is chosen to be continuous on $U_1(\azk-1)$. 
Further there is a continuous function 
$Q:U_1(\azk-1) \rightarrow \RR$, such that 
$Q(x) > \max\{q_{\azk-1}(t)\mid t\in[0,T(x)]\}$.
 
We now consider the open set $\tilde{U}_1(\azk):=$
\BeqO
	\le\{ (x,q_\azk, v_\azk)\in U_1(\azk-1)\times \RR^2 
	\mid 
	v_\azk < v_{\azk-1},\ q_\azk>Q(x),\ q_\azk+ v_\azk T(x)>Q(x) \ri\}.
\EeqO

According to our induction assumption the last collision took place 
between the particle $\azk-2$ and $\azk-1$, 
and the next one should occur 
between the particle $\azk-1$ and $\azk$. 
We can see from the equation \eqref{eq:gesch_diff} of the 
preliminary note 

\begin{align*}
\dvuo{\azk-2}{\azk-1} 
&=  
\dvuo{\azk-2}{\azk-2} + \frac{2m_{\azk}}{m_{\azk-1} + m_{\azk}}\dvuo{\azk-1}{\azk-2}\\
& =  
-\dvuo{\azk-2}{\azk-3} + \frac{2m_{\azk}}{m_{\azk-1} + m_{\azk}} \le( \dvuo{\azk-1}{\azk-3} 
+ \frac{2m_{\azk-2}}{m_{\azk-2}+m_{\azk-1}}\dvuo{\azk-2}{\azk-3} \ri)\\
& = 
{\textstyle \frac{4
m_{\azk-2}m_{\azk}-(m_{\azk-2}+m_{\azk-1})(m_{\azk-1}+m_{\azk})}{(m_{\azk-2}+m_{\azk-1})(m_{\azk-1}+m_{\azk})}}
\dvuo{\azk-2}{\azk-3}+
{\textstyle {\frac{2m_{\azk}}{m_{\azk-1}+m_{\azk}}}} \dvuo{\azk-1}{\azk-3}
	.
\end{align*}

A necessary condition is that particle $\azk-2$ and $\azk-1$ have no 
collision once again so that $\dvuo{\azk-2}{\azk-1}\geq 0$. 
Thus, inserting the definition of the $\dvuo{r}{s}$, 
and noting $\vuo{\azk-1}{\azk-3} = \vuo{\azk-1}{0}$ and 
$\vuo{\azk}{\azk-3} = \vuo{\azk}{0}$, we see that
\[
0 \leq
{\textstyle \frac{4
m_{\azk-2}m_{\azk}-(m_{\azk-2}+m_{\azk-1})(m_{\azk-1}+m_{\azk})}{(m_{\azk-2}+m_{\azk-1})(m_{\azk-1}+m_{\azk})}}
\le(\vuo{\azk-1}{0} - \vuo{\azk-2}{\azk-3} \ri)
+ {\textstyle {\frac{2m_{\azk}}{m_{\azk-1}+m_{\azk}}}} 
\le( \vuo{\azk}{0} - \vuo{\azk-1}{0} \ri)\!\!  .
\]
With the above choice of $\tilde{U}_1(\azk)$ 
and by solving the last inequality, we have
\begin{eqnarray}
	0 &\!\!>\!\!& 
	\dvuo{\azk-1}{0} \geq \le( 
	{\textstyle
	\frac{3m_{\azk-1}-m_{\azk-2}}{2(m_{\azk-2}+m_{\azk-1})}} +
	{\textstyle \frac{m_{\azk-1}}{2m_{\azk}}} -1 \ri)
	\vuo{\azk-1}{0}
	+\le( {\textstyle\frac{3m_{\azk-2}-m_{\azk-1}}{2(m_{\azk-2}+m_{\azk-1})}} -
	{\textstyle\frac{m_{\azk-1}}{2m_{\azk}}} \ri)\vuo{\azk-2}{\azk-3} 
	\nonumber\\
	&\!\!=\!\!& 
	\le(
	\frac{m_{\azk-1}-3m_{\azk-2}}{2(m_{\azk-2}+m_{\azk-1})} + 
	\frac{m_{\azk-1}}{2m_{\azk}} \ri) 
	\le(  \vuo{\azk-1}{0} - \vuo{\azk-2}{\azk-3} \ri). \label{eq:ungl_n-1}
\end{eqnarray}
As $\vuo{\azk-2}{\azk-3} > \vuo{\azk-1}{\azk-3} = \vuo{\azk-1}{0}$ 
(otherwise there would be no $(\azk-2)$--th collision), the inequality is satisfied if and only if
\Beq
	 m_{\azk} \leq 
	 \frac{m_{\azk-1}(m_{\azk-1}+m_{\azk-2})}{3m_{\azk-2}-m_{\azk-1}} 
	 \qmbox{for} 
	 3m_{\azk-2}>m_{\azk-1} . 
	 \label{eq:massen_rekursion_1}
\Eeq
{With $f$ from} \eqref{eq:schranken_fkten} the quotient 
$\frac{m_{\azk}}{m_{\azk-1}}$ has a recursive inequality
\BeqO
	\frac{m_{\azk}}{m_{\azk-1}} \leq 
	\frac{m_{\azk-1}+m_{\azk-2}}{3m_{\azk-2}-m_{\azk-1}} = 
	\frac{1 + \frac{m_{\azk-1}}{m_{\azk-2}}}{3-\frac{m_{\azk-1}}{m_{\azk-2}}} = 
	f\le( \frac{m_{\azk-1}}{m_{\azk-2}} \ri) .
\EeqO
By {inequality (\ref{ineq:gfx}) of} the preliminary note we can find two initial conditions 
$m_1$ and $m_2$ in the interval $\big(1,(\azk+1)/\azk\big)$, such that
\BeqO
	1 < 
	\frac{m_{\azk}}{m_{\azk-1}} < 
	f_{\azk-2}\le( \frac{m_{2}}{m_{1}}\ri)= 
	\frac{(\azk-2)\le(\frac{m_2}{m_1}-1\ri)-2\frac{m_2}{m_1}}{(n-2)\le(\frac{m_2}{m_1}-1\ri)-2} < 
	\delta =(1+\ee)^{1/(\azk-1)} .
\EeqO
We choose a velocity $\vuo{\azk}{0}$ in the interval
\BeqO
	\left( 
	\left(
{\textstyle\frac{3m_{\azk-1}-m_{\azk-2}}{2(m_{\azk-2}+m_{\azk-1})}} + 
	{\textstyle\frac{m_{\azk-1}}{2m_{\azk}}} \right)\vuo{\azk-1}{0} 
	+\left(
{\textstyle\frac{3m_{\azk-2}-m_{\azk-1}}{2(m_{\azk-2}+m_{\azk-1})}} - 
	{\textstyle\frac{m_{\azk-1}}{2m_{\azk}}}
	\right)\vuo{\azk-2}{\azk-3}\ ,\ \vuo{\azk-1}{0} 
	\right) .
\EeqO
This is possible, since the length of the interval is positive, 
according to the formula \eqref{eq:ungl_n-1} and to the 
condition \eqref{eq:massen_rekursion_1} on the masses. 

By the induction hypothesis we found open neighbourhoods 
\[\auf{\Auo{V}{1}{0} = 	\big(
	\Auo{\underline{v}}{1}{0},
	\Auo{\overline{v}}{1}{0}\big)\  }{\ \Auo{V}{\azk-1}{0} =
	\big(\Auo{\underline{v}}{\azk-1}{0},
	\Auo{\overline{v}}{\azk-1}{0}\big) }
\]
and thus we found a neighbourhood 
$
	\Auo{V}{\azk-2}{\azk-3} = \big( 
	\Auo{\underline{v}}{\azk-2}{\azk-3}, 
	\Auo{\overline{v}}{\azk-2}{\azk-3} \big)
$ 
for $\vuo{\azk-2}{\azk-3}$, too.
We can find a neighbourhood for $\vuo{\azk}{0}$ if and only if the length of the interval for $\vuo{\azk}{0}$ is greater than zero. 
Therefore we must have
\BeqO
	\min \left\lbrace \vuo{\azk-2}{\azk-3}-\vuo{n-1}{0}\ |\ 
	\vuo{\azk-2}{\azk-3}\in \Auo{V}{\azk-2}{\azk-3},\vuo{\azk-1}{0}
	\in \Auo{V}{\azk-1}{0} \right\rbrace = 
	\Auo{\underline{v}}{\azk-2}{\azk-3} - \Auo{\overline{v}}{\azk-1}{0} > 0.
\EeqO
One can achieve this after possibly reducing the size of 
the interval $ \Auo{V}{\azk-1}{0} $.

Now we find an explicit configuration space neighbourhood of 
the $\azk$--th particle. We have to choose the lower limit of the 
interval by
\BeqO
	\Auo{\underline{q}}{\azk}{0} \geq 
	\frac{ \Auo{\overline{q}}{\azk-1}{0} - 
	\Auo{\underline{q}}{\azk-2}{0} }{ \Auo{\underline{v}}{\azk-2}{0} 
	- \Auo{\overline{v}}{\azk-1}{0} }\left( \Auo{\underline{v}}{\azk-1}{0} -
	 \Auo{\overline{v}}{\azk}{0} \right) + \Auo{\underline{q}}{\azk-1}{0} .
\EeqO
So we have the required sequence of collisions. 
We can choose the upper limit $\Auo{\overline{q}}{\azk}{0}$ greater than $\Auo{\underline{q}}{\azk}{0}$.

Therefore we  found a non-empty open set 
$V_1:=W_1 \times U_1 \subset \RR_+^n \times \RR^\azk \times \RR^\azk$
in extended phase space, leading to 
$\azk-1={\azk-1 \choose 1}$ collisions.
\bigskip 

\noindent
\textit{Case 2)} 
Here we use a transversality argument, perturbing the case of equal masses $m_1=\ldots=m_n=1$.
For equal masses there is an open bounded set 
$ \tilde{U}_2 \subset U(\azk)\subset \RR^{\azk} \times \RR^{\azk} $ 
(with $U(\azk)$ from (\ref{eq:ma}))
of initial conditions in phase space with the following properties:
Both the minimal time between the (binary) collisions and the minimal collision angle 
\BeqO
\beta\; :=\; \min_{1\leq i<j\leq n}\arccos 
{\textstyle\frac{ 1+\vuo{i}{0}\vuo{j}{0} }
{ \sqrt{ 1+\big( \vuo{i}{0}\big)^2 }\sqrt{ 1+\big( \vuo{j}{0}\big)^2 }
}}\ >\ 0
\EeqO
in the extended 
configuration space $ \RR^\azk \times \RR_t $ are bounded from below by positive constants
for all $(q^{(0)},v^{(0)})\in \tilde{U}_2$.

The final positions and velocities for binary collisions depend continuously on the 
initial data and the masses, see (\ref{vv}).
Thus be uniform continuity on compacts, we find non-empty open neighborhoods $U_2\subset\tilde{U}_2$
and $W_2= (1-\ee,1+\ee)^n$ for which the same statement holds true for the
initial data in the subset $W_2\times U_2$ of extended phase space. In particular, the number of 
collisions  equals ${ n \choose 2}$.

\bigskip 

\noindent
\textit{Case 3)} 
We consider for $\delta:= (1+\ee)^{1/(\azk-1)}$ and 
$\delta'\in (1,\delta)$ the non-empty open set
$W_3=W_3(n,\delta,\delta'):=$
\BeqO
\le\{
	(\auf{m_1}{m_\azk})\in (1-\ee,1+\ee)^\azk\ \Big|\ 
	\forall k=\auf{2}{\azk}: {\textstyle\frac{m_{k}}{m_{k-1}}}\in(\delta',\delta)\ri\}.
\EeqO
We prove the assertion again by induction on the number of the particles $\azk$. 
The case for $\azk=2$ is simple, since then ${ n+1 \choose 3}=1$.

In analogy to Case 1) we assume that we have already found a neighbourhood 
$U_3(\azk-1)$ for the first $\azk-1$ particles. 
Moreover for any initial  condition $x \in U_3(\azk-1)$ 
there is a time $T(x) \in (0,\infty)$, such that there are no more 
collisions after that time $T(x)$. Further exists a continuous 
function $Q:U_3(\azk-1) \rightarrow \RR$, such that $Q(x) > \max\{q_{\azk-1}(t)\mid t\in[0,T(x)]\}$.

Now we consider an open set $\tilde{U}_3(\azk):=$
\BeqO
	\le\{ 
	(x,q_\azk,v_\azk)\in U_3(\azk-1)\times \RR^2 \mid 
	v_\azk < v_{\azk-1},\ q_\azk>Q(x),\ q_\azk+ v_\azk T(x)>Q(x) \ri\}.
\EeqO 
By these initial conditions particle $\azk-1$ and $n$ will hit after time $T$.
We will show, that for an appropriate subset 
$U_3(\azk) \subset \tilde{U}_3(\azk)$ 
there are exactly ${\azk \choose 2}$ collisions after time $T$. 
This then proves the inductive step, since 
\[{\textstyle { n \choose 3}+{ n \choose 2}={ n+1 \choose 3}}.\]
Assumed, that there is a collision after time $T(x)$ for the initial conditions 
$x \in (q,v) \in \tilde{U}_3(\azk) \subset \RR^\azk \times \RR^\azk$, then the 
first collision of this kind will occur between particle $\azk-1$ and $\azk$.

We denote the velocity difference of the particles $k+1$ and $k$ between the 
$\ell$--th and $(\ell+1)$--th collision after time $T$ by $\dvuo{k}{\ell}$. 
By definition of $\tilde{U}_3(\azk)$ we start with $\dvuo{k}{\ell} \geq 0$ for 
$k=\auf{1}{\azk-2}$ and we assume that $\dvuo{\azk-1}{0} \ll 0$. 
That assumption is justified, since we still can freely choose the initial 
velocity of the $n$--th particle.
Then we see
\BeqO
	\dvuo{\azk-1}{1}=-\dvuo{\azk-1}{0} \gg 0	
	\qmbox{,}	
	\dvuo{k}{1}= \dvuo{k}{0} \qquad(k=\auf{1}{\azk-3})
\EeqO
and 
\Beq
	\dvuo{\azk-2}{1} = 
	\dvuo{\azk-2}{0}+ \frac{2m_{\azk}}{m_{\azk-1}+m_{\azk}} 
	\dvuo{\azk-1}{0}\approx 
	\frac{2m_{\azk}}{m_{\azk-1}+m_{\azk}} 
	\dvuo{\azk-1}{0}\ll 0. \label{eq:approx}
\Eeq
If the next collision occurs between particle $\azk-2$ and $\azk-1$, then there is
\begin{align*}
	\dvuo{\azk-1}{2} &=  
	\dvuo{\azk-1}{1} + \frac{2m_{\azk-2}}{m_{\azk-2} +
	m_{\azk-1}}\dvuo{\azk-2}{1}\\
	& =  
	-\dvuo{\azk-1}{0} + \frac{2m_{\azk-2}}{m_{\azk-2}  +
	m_{\azk-1}} \le( \dvuo{\azk-2}{0} + 
	\frac{2m_{\azk}}{m_{\azk-1}+
	m_{\azk}}\dvuo{\azk-1}{0} \ri)\\
	& =  
	{\textstyle \frac{4 m_{\azk-2}m_{\azk}-(m_{\azk-2}+
	m_{n\azk-1})(m_{\azk-1}+m_{\azk})}{(m_{\azk-2}+
	m_{\azk-1})(m_{\azk-1}+m_{\azk})}}
	\dvuo{\azk-1}{0}+{\textstyle
	\frac{2m_{\azk-2}}{m_{\azk-2}+m_{\azk-1}}} \dvuo{\azk-2}{0}
	.
\end{align*}

Thus, the coefficient of $ \dvuo{n-1}{0} $ is positive, the numerator must be positive, i.e. it must apply
\begin{multline}
	0 < 
	4 m_{\azk-2}m_{\azk}-(m_{\azk-2}+m_{\azk-1})(m_{\azk-1}+m_{\azk}) \\
	= 
	(3m_{\azk-2} - m_{\azk-1})m_\azk - (m_{\azk-2}+m_{\azk-1})m_{\azk-1} .
	\label{eq:massen_rekursion_herleitung}
\end{multline}
Thus, we obtain the following condition for the $ n $--th mass
\Beq
	m_\azk > \frac{(m_{\azk-2}+m_{\azk-1})m_{\azk-1}}{3m_{\azk-2} - m_{\azk-1}}  
	\qmbox{for} 3m_{\azk-2} > m_{\azk-1}. 
	\label{eq:massen_rekursion_3}
\Eeq

Now we consider the quotient $ \frac{m_{\azk}}{m_{\azk-1}} $, then by \eqref{eq:schranken_fkten} and \eqref{eq:massen_rekursion_3} the result is
\BeqO
	\frac{m_{\azk}}{m_{\azk-1}} > 
	\frac{m_{\azk-2}+m_{\azk-1}}{3m_{\azk-2}-m_{\azk-1}} =
	\frac{ 1+\frac{m_{\azk-1}}{m_{\azk-2}} }{ 3-\frac{m_{\azk-1}}{m_{\azk-2}} }=
	f\le(\frac{m_{\azk-1}}{m_{\azk-2}}\ri) .
\EeqO

According to the preliminary note, we can find two initial conditions $m_1$ 
and $m_2$ in the interval $\big(1,(\azk+1)/\azk\big)$ for 
$\frac{m_{\azk}}{m_{\azk-1}}$  so that
\begin{multline*}
	1< \delta' = 
	{\textstyle \frac{(\azk-2)\le(\frac{m_2}{m_1}-1\ri)-
	2\frac{m_2}{m_1}}{(\azk-2)\le(\frac{m_2}{m_1}-1\ri)-2}} < 
	{\textstyle \frac{m_\azk}{m_{\azk-1}}} 
	< 2^{\azk-2}\le({\textstyle \frac{m_2}{m_1}}-1\ri)+1 = 
	\delta = (1+\ee)^{1/(\azk-1)} .
\end{multline*}
By iteration, using \eqref{eq:approx}, we have for $\azk-1 $ collisions negative values for $\dvuo{\ell}{\azk-1}$ ($\ell=2,\ldots,\azk-1$) and we have $\dvuo{\azk-1}{0}\ll 0$.
Moreover, $\dvuo{1}{\azk-1}\gg 0$.

This means, however, that for the particles $\auf{2}{\azk}$ by Case 2) a neighbourhood $U_2(\azk-1)$ can be found, so that we have ${\azk-1 \choose 2}$ collisions. Since $\azk-1+{\azk-1 \choose 2}={\azk \choose 2}$, we proved the assertion.
\end{proof}

\section{Numerical Example}

By the proof of Theorem \ref{thm:maintheorem}, Part 3) we can also produce 
${\azk+1 \choose 3}$ collisions with mass distribution 
$m_1 > \ldots > m_{k-1} >\; m_k \;< m_{k+1} < \ldots < m_\azk$.

If we consider \eqref{eq:massen_rekursion_herleitung}, then we can rewrite this equation to 
\BeqO
	4 m_{\azk-2}m_{\azk} > (m_{\azk-2}+m_{\azk-1})(m_{\azk-1}+m_{\azk})
\EeqO
and we see that this equation is symmetric in $m_{\azk-2}$ and $m_{\azk}$.

We can add masses, alternating between left and right, so we get ${\azk+1 \choose 3}$ collisions (see Figure \ref{fig:example}).

\begin{figure}[h]
\begin{picture}(0,0)%
\includegraphics{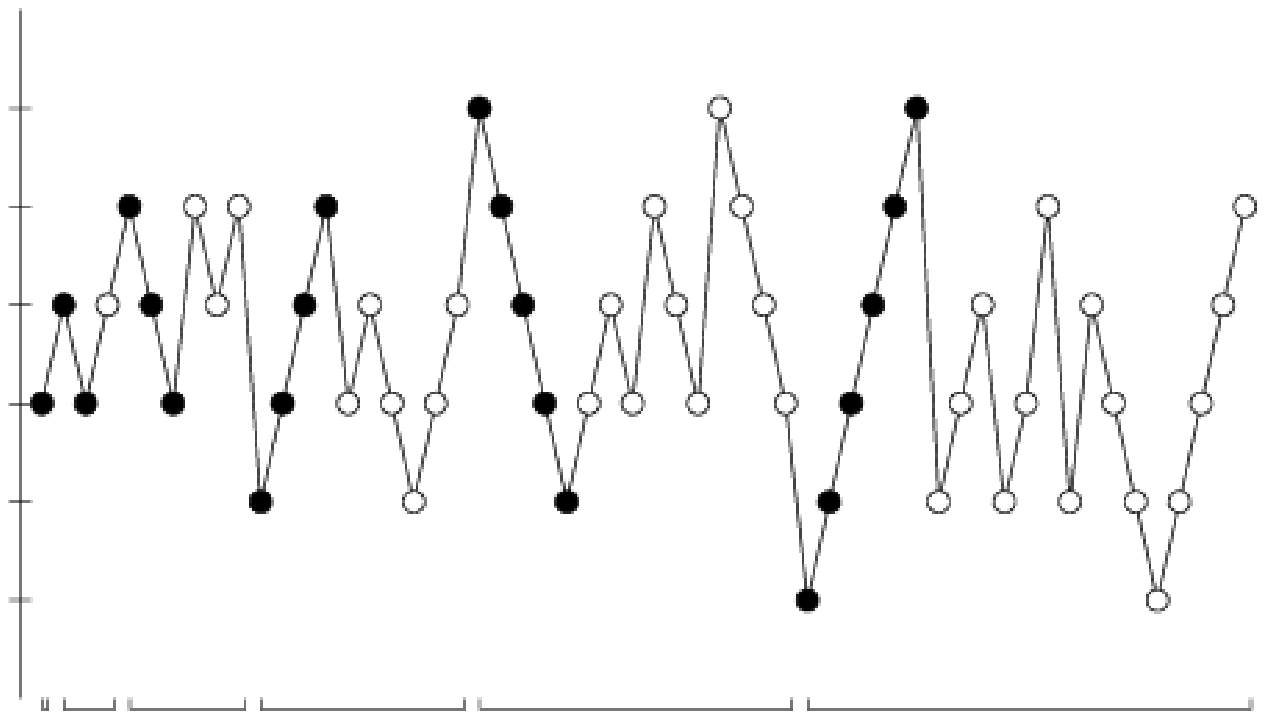}%
\end{picture}%
\setlength{\unitlength}{4144sp}%
\begingroup\makeatletter\ifx\SetFigFontNFSS\undefined%
\gdef\SetFigFontNFSS#1#2#3#4#5{%
  \reset@font\fontsize{#1}{#2pt}%
  \fontfamily{#3}\fontseries{#4}\fontshape{#5}%
  \selectfont}%
\fi\endgroup%
\begin{picture}(5753,3472)(-94,-2621)
\put(-79,344){\makebox(0,0)[rb]{\smash{{\SetFigFontNFSS{10}{12.0}{\familydefault}{\mddefault}{\updefault}{\color[rgb]{0,0,0}$q_6=q_7$}%
}}}}
\put(-79,-106){\makebox(0,0)[rb]{\smash{{\SetFigFontNFSS{10}{12.0}{\familydefault}{\mddefault}{\updefault}{\color[rgb]{0,0,0}$q_5=q_6$}%
}}}}
\put(-79,-556){\makebox(0,0)[rb]{\smash{{\SetFigFontNFSS{10}{12.0}{\familydefault}{\mddefault}{\updefault}{\color[rgb]{0,0,0}$q_4=q_5$}%
}}}}
\put(-79,-1006){\makebox(0,0)[rb]{\smash{{\SetFigFontNFSS{10}{12.0}{\familydefault}{\mddefault}{\updefault}{\color[rgb]{0,0,0}$q_3=q_4$}%
}}}}
\put(-79,-1456){\makebox(0,0)[rb]{\smash{{\SetFigFontNFSS{10}{12.0}{\familydefault}{\mddefault}{\updefault}{\color[rgb]{0,0,0}$q_2=q_3$}%
}}}}
\put(-79,-1906){\makebox(0,0)[rb]{\smash{{\SetFigFontNFSS{10}{12.0}{\familydefault}{\mddefault}{\updefault}{\color[rgb]{0,0,0}$q_1=q_2$}%
}}}}
\put(116,-2770){\makebox(0,0)[b]{\smash{{\SetFigFontNFSS{10}{12.0}{\familydefault}{\mddefault}{\updefault}{\color[rgb]{0,0,0}${2 \choose 2}$}%
}}}}
\put(316,-2770){\makebox(0,0)[b]{\smash{{\SetFigFontNFSS{10}{12.0}{\familydefault}{\mddefault}{\updefault}{\color[rgb]{0,0,0}${3 \choose 2}$}%
}}}}
\put(766,-2770){\makebox(0,0)[b]{\smash{{\SetFigFontNFSS{10}{12.0}{\familydefault}{\mddefault}{\updefault}{\color[rgb]{0,0,0}${4 \choose 2}$}%
}}}}
\put(1566,-2770){\makebox(0,0)[b]{\smash{{\SetFigFontNFSS{10}{12.0}{\familydefault}{\mddefault}{\updefault}{\color[rgb]{0,0,0}${5 \choose 2}$}%
}}}}
\put(2816,-2770){\makebox(0,0)[b]{\smash{{\SetFigFontNFSS{10}{12.0}{\familydefault}{\mddefault}{\updefault}{\color[rgb]{0,0,0}${6 \choose 2}$}%
}}}}
\put(4616,-2770){\makebox(0,0)[b]{\smash{{\SetFigFontNFSS{10}{12.0}{\familydefault}{\mddefault}{\updefault}{\color[rgb]{0,0,0}${7 \choose 2}$}%
}}}}
\end{picture}%

\caption{Numerical example of the theorem for $\azk=7$ particles and 
mass distribution $m_1 > \ldots > m_3$, $m_3 < \ldots < m_7$. 
The $(\azk-1)$ collisions from the induction step are black-colored.} \label{fig:example}
\end{figure} 

%
%
\end{document}